\definecolor{mygreen}{RGB}{38,160,0}
\newcommand{\se}{\mathsf{sum\_err\_c}}
\newcommand{\seo}{\mathsf{sum\_err\_o}}
\newcommand{\solc}{{\sf solution}}
\newcommand{\sol}{{\sf solution}}
\renewcommand{\geq}{\geqslant}
\renewcommand{\leq}{\leqslant}
\begin{document}

\title{A Novel Combinatorial Method for Estimating Transcript Expression with RNA-Seq: \\Bounding the Number of Paths}

\author{
Alexandru I. Tomescu\inst{1} \and Anna Kuosmanen\inst{1} \and Romeo Rizzi\inst{2} \and Veli M{\"a}kinen\inst{1} 
}

\institute{Helsinki Institute for Information Technology HIIT, \\
Department of Computer Science, University of Helsinki, Finland\\
\email{\{tomescu,aekuosma,vmakinen\}@cs.helsinki.fi}
\and
Department of Computer Science, University of Verona, Italy\\
\email{romeo.rizzi@univr.it}
}

\maketitle

\begin{abstract}

RNA-Seq technology offers new high-throughput ways for transcript identification and quantification based on short reads, and has recently attracted great interest. The problem is usually modeled by a weighted splicing graph whose nodes stand for exons and whose edges stand for split alignments to the exons. The task consists of finding a number of paths, together with their expression levels, which optimally explain the coverages of the graph under various fitness functions, such least sum of squares. In (Tomescu \emph{et al.} RECOMB-seq 2013) we showed that under general fitness functions, if we allow a polynomially bounded number of paths in an optimal solution, this problem can be solved in polynomial time by a reduction to a min-cost flow program. In this paper we further refine this problem by asking for a bounded number $k$ of paths that optimally explain the splicing graph. This problem becomes NP-hard in the strong sense, but we give a fast combinatorial algorithm based on dynamic programming for it.  In order to obtain a practical tool, we implement three optimizations and heuristics, which achieve better performance on real data, and similar or better performance on simulated data, than state-of-the-art tools Cufflinks, IsoLasso and SLIDE. Our tool, called Traph, is available at \url{http://www.cs.helsinki.fi/gsa/traph/}

\end{abstract}

\section{Introduction}\label{sect:introduction}

In this paper we tackle a biological multi-assembly problem \cite{citeulike:177150} motivated by the recent RNA-Seq technology \cite{Mortazavi2008MammalRNASeq,pwm09,citeulike:8493482}: reconstruct as accurately as possible the RNA transcripts of a gene, given only a set of short RNA reads sequenced from them. The transcripts are concatenations of exons, the difficulty of the problem arising from the fact that they can have some identical exons.

\begin{figure}[!tpb]
\centerline{
\subfigure[\label{fig:problem-example-a}]{\scalebox{0.6}{
\begin{tikzpicture}[scale=.7, -angle 45]
\node[label={[label distance=-1mm]above left:$8$},circle,draw=black,inner sep = -2mm,minimum size = 4mm,fill=white] (a) at (0.5,0) {$a$};
\node[label={[label distance=-0.5mm]90:$6$},circle,draw=black,inner sep = -2mm,minimum size = 4mm,fill=white] (b) at (2.5,1) {$b$};
\node[label={[label distance=0mm]-90:$3$},circle,draw=black,inner sep = -2mm,minimum size = 4mm,fill=white] (e) at (2.5,-1) {$e$};
\node[label={[label distance=-0.5mm]90:$5$},circle,draw=black,inner sep = -2mm,minimum size = 4mm,fill=white] (c) at (6,1) {$c$};
\node[label={[label distance=0mm]-90:$3$},circle,draw=black,inner sep = -2mm,minimum size = 4mm,fill=white] (f) at (6,-1) {$f$};
\node[label={[label distance=-1mm]above right:$8$},circle,draw=black,inner sep = -2mm,minimum size = 4mm,fill=white] (d) at (8,0) {$d$};
\draw (a) to node[draw=none,fill=none,auto] {$5$} (b);
\draw (a) to node[draw=none,fill=none,below] {$3$} (e);
\draw (e) to node[draw=none,fill=none,auto] {$3$} (b);
\draw (b) to node[draw=none,fill=none,auto] {$4$} (f);
\draw (b) to node[draw=none,fill=none,auto] {$5$} (c);
\draw (c) to node[draw=none,fill=none,auto] {$5$} (d);
\draw (e) to node[draw=none,fill=none,below] {$3$} (f);
\draw (f) to node[draw=none,fill=none,below] {$4$} (d);
\end{tikzpicture}
}}\hspace{.15cm}\subfigure[\label{fig:problem-example-b}]{\scalebox{0.6}{
\begin{tikzpicture}[scale=.7, -angle 45]
\node[label={[label distance=-1mm]above left:{\textcolor{red}{$5$}$+$\textcolor{blue}{$3$}}},circle,draw=black,inner sep = -2mm,minimum size = 4mm,fill=white] (a) at (0.5,0) {$a$};
\node[label={[label distance=-0.5mm]90:{\textcolor{red}{$5$}}},circle,draw=black,inner sep = -2mm,minimum size = 4mm,fill=white] (b) at (2.5,1) {$b$};
\node[label={[label distance=0mm]-90:\textcolor{blue}{$3$}},circle,draw=black,inner sep = -2mm,minimum size = 4mm,fill=white] (e) at (2.5,-1) {$e$};
\node[label={[label distance=-0.5mm]90:{\textcolor{red}{$5$}}},circle,draw=black,inner sep = -2mm,minimum size = 4mm,fill=white] (c) at (6,1) {$c$};
\node[label={[label distance=0mm]-90:\textcolor{blue}{$3$}},circle,draw=black,inner sep = -2mm,minimum size = 4mm,fill=white] (f) at (6,-1) {$f$};
\node[label={[label distance=-1mm]above right:{\textcolor{red}{$5$}$+$\textcolor{blue}{$3$}}},circle,draw=black,inner sep = -2mm,minimum size = 4mm,fill=white] (d) at (8,0) {$d$};
\draw[line width=1.5pt,draw=red] (a) to node[draw=none,fill=none,auto] {\textcolor{red}{$5$}} (b);
\draw[line width=1.5pt,draw=blue] (a) to node[draw=none,fill=none,below] {\textcolor{blue}{$3$}} (e);
\draw (e) to node[draw=none,fill=none,auto] {$3$} (b);
\draw (b) to node[draw=none,fill=none,auto] {$4$} (f);
\draw[line width=1.5pt,draw=red] (b) to node[draw=none,fill=none,auto] {\textcolor{red}{$5$}} (c);
\draw[line width=1.5pt,draw=red] (c) to node[draw=none,fill=none,auto] {\textcolor{red}{$5$}} (d);
\draw[line width=1.5pt,draw=blue] (e) to node[draw=none,fill=none,below] {\textcolor{blue}{$3$}} (f);
\draw[line width=1.5pt,draw=blue] (f) to node[draw=none,fill=none,below] {\textcolor{blue}{$3$}} (d);
\end{tikzpicture}
}}\hspace{.15cm}\subfigure[\label{fig:problem-example-c}]{\scalebox{0.6}{
\begin{tikzpicture}[scale=.7, -angle 45]
\node[label={[label distance=-1mm]above left:\textcolor{red}{$5$}$+$\textcolor{blue}{$3$}},circle,draw=black,inner sep = -2mm,minimum size = 4mm,fill=white] (a) at (0.5,0) {$a$};
\node[label={[label distance=-0.5mm]90:\textcolor{red}{$5$}$+$\textcolor{blue}{$3$}},circle,draw=black,inner sep = -2mm,minimum size = 4mm,fill=white] (b) at (2.5,1) {$b$};
\node[label={[label distance=0mm]-90:\textcolor{blue}{$3$}},circle,draw=black,inner sep = -2mm,minimum size = 4mm,fill=white] (e) at (2.5,-1) {$e$};
\node[label={[label distance=-.5mm]90:\textcolor{red}{$5$}},circle,draw=black,inner sep = -2mm,minimum size = 4mm,fill=white] (c) at (6,1) {$c$};
\node[label={[label distance=0mm]-90:\textcolor{blue}{$3$}},circle,draw=black,inner sep = -2mm,minimum size = 4mm,fill=white] (f) at (6,-1) {$f$};
\node[label={[label distance=-1mm]above right:\textcolor{red}{$5$}$+$\textcolor{blue}{$3$}},circle,draw=black,inner sep = -2mm,minimum size = 4mm,fill=white] (d) at (8,0) {$d$};
\draw[line width=1.5pt,draw=red] (a) to node[draw=none,fill=none,auto] {\textcolor{red}{$5$}} (b);
\draw[line width=1.5pt,draw=blue] (a) to node[draw=none,fill=none,below] {\textcolor{blue}{$3$}} (e);
\draw[line width=1.5pt,draw=blue] (e) to node[draw=none,fill=none,auto] {\textcolor{blue}{$3$}} (b);
\draw[line width=1.5pt,draw=blue] (b) to node[draw=none,fill=none,auto] {\textcolor{blue}{$3$}} (f);
\draw[line width=1.5pt,draw=red] (b) to node[draw=none,fill=none,auto] {\textcolor{red}{$5$}} (c);
\draw[line width=1.5pt,draw=red] (c) to node[draw=none,fill=none,auto] {\textcolor{red}{$5$}} (d);
\draw (e) to node[draw=none,fill=none,below] {$3$} (f);
\draw[line width=1.5pt,draw=blue] (f) to node[draw=none,fill=none,below] {\textcolor{blue}{$3$}} (d);
\end{tikzpicture}
}}}

\caption{An example for $k = 2$, and fitness function $f_v(x) = x^2$, $f_{uv}(x) = x^2$, for all nodes $v$, and edges $(u,v)$\label{fig:problem-example}. In Fig.~\ref{fig:problem-example-a}, a splicing directed acyclic graph; its nodes and edges are labeled with their observed average coverage. In Fig.~\ref{fig:problem-example-b}, the optimal 2 paths for Problem~$2$-\textsf{UTEO}, with expression levels 5 and 3; their cost is $1 + 1 = 2$ (from node $b$, and edge ($f$,$d$), respectively). In the case of Problem~$2$-\textsf{UTEC}, we have to add $3^2 + 4^2$ to their cost (from uncovered edges ($e$,$b$), ($b$,$f$)), which is not optimal. In Fig.~\ref{fig:problem-example-c}, the optimal 2 paths for Problem~$2$-\textsf{UTEC}, with expression levels 5 and 3; their cost is $2^2 + (1 + 1 + 3^2) = 15$ (from node $b$, and edges ($b$,$f$), ($f$,$d$), ($e$,$f$), respectively).}
\end{figure}
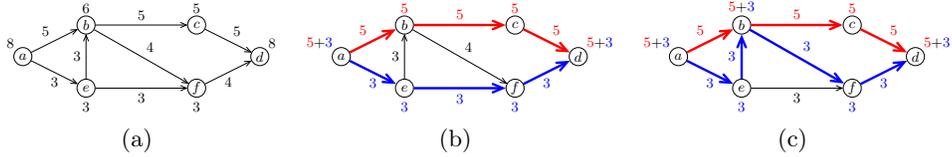

This problem has attracted great interest from the community, resulting in tools such as Cufflinks \cite{Tra10}, IsoInfer/IsoLasso \cite{DBLP:conf/recomb/FengLJ10,LFJ11}, SLIDE \cite{Lin:2012uq}, CLIIQ \cite{Li:2011fk}, Scripture \cite{Guttman2010}, iReckon \cite{iReckon}, TRIP \cite{TRIP}, NSMAP \cite{citeulike:9303305}, Montebello \cite{citeulike:12016403}, FlipFlop \cite{bernard:hal-00803134}. The methods rely on a graph model, the most common being a \emph{splicing graph} \cite{Heber01072002}. Its nodes represent contiguous stretches of DNA uninterrupted by spliced reads (called \emph{pseudo-exons}), while its edges are derived from overlaps, or from spliced read alignments. The splicing graph is directed and acyclic (a DAG), the orientation of the edges being according to the starting position of the pseudo-exons inside the genome. Every node $v$ has an associated observed average coverage, computed as the total number of reads aligned to the pseudo-exon $v$, divided by the exon length. Similarly, every edge $(u,v)$ has an associated coverage, which is the total number of reads split aligned to the junction between pseudo-exons $u$ and $v$. 

The biological problem translates to covering the graph with intersecting paths, under different cost models, such as least sum of squares (IsoInfer/IsoLasso, SLIDE), least sum of absolute differences (CLIIQ). Many of the above mentioned tools work by exhaustively enumerating all possible (combinations of) paths, with some restrictions, and then estimating their fitness with an Integer Linear Program,  Quadratic Program, or a QP + LASSO regression. Cufflinks computes a path cover with a minimum number of paths, and only in a second step estimates their expression levels. 

In \cite{Tomescu:2013fk} we introduced a novel very general framework, encompassing many of the previous proposals; according to the survey \cite{rna-seq-survey}, it can be classified as \emph{de novo} genome-based, since it does not use annotation information. Apparently, parallel to our work, a similar min-cost flow approach, called FlipFlop, was proposed in \cite{bernard:hal-00803134}. Our method assumes that for every node $v$ and edge $(u,v)$ of the splicing graph, we are given fitness functions $f_v$ and $f_{uv}$ which penalize the difference between the observed average coverage and the predicted coverage. The problem was translated as finding (an unlimited number of) paths with associated expression levels such that the sum of all penalties is minimum. For example, if for every node or edge $z$, $f_z(x) = x^2$, then we have a least sum of squares model as in IsoInfer/IsoLasso and SLIDE, and if $f_z(x) = x$ we have a least sum of absolute difference model as in CLIIQ (see Fig.~\ref{fig:problem-example} for an example).Various other fitting functions can be considered, such as $f_z(x) = x/cov(z)$ \cite{STAN:STAN232}, or $f_v(x) = x^2 * length(v)^2$. Therein, we proposed a min-cost flow method to solve this problem in polynomial-time, assuming the fitting functions are convex, which was competitive with Cufflinks and IsoLasso. In that approach, the size of the solution is polynomially bounded, but it was left open to find even more parsimonious optimal or good solutions.

We now tackle the problem of optimally covering the splicing graph with a bounded number $k$ of paths (see Fig.~\ref{fig:problem-example} for an example). This is relevant in practice since a small fraction of the graph can be erroneous due to various biological events or technical errors, like template switching, self-priming, reading errors, wrong splicing alignment \cite{citeulike:399605,citeulike:8493482,citeulike:11526711,citeulike:9386912}. 

\begin{problem}[$k$-\textsf{UTEC}]
Given a splicing DAG $G=(V,E)$ with positive coverage values $cov(v)$ and $cov(u,v)$, integer $k \geq 1$, and fitting functions $f_v(\cdot)$ and $f_{uv}(\cdot)$, for all $v \in V$ and $(u,v) \in E$, the \emph{$k$-Unannotated Transcript Expression Cover} Problem is to find a tuple $\mathcal{P}$ of $k$ paths from the sources of $G$ to the sinks of $G$,
with an estimated expression level $e(P)$ for each path $P \in \mathcal{P}$, which minimize 
\begin{align*}
\se(G,\mathcal{P}) := &\sum_{v\in V} f_v\Big(\Big|cov(v) - \sum_{P \in \mathcal{P}\text{ s.t. } v\in P} e(P) \Big|\Big) + &\\
&\hfill \sum_{(u,v)\in E} f_{uv}\Big(\Big|cov(u,v) - \sum_{P \in \mathcal{P}\text{ s.t. } (u,v)\in P} e(P) \Big|\Big).&
\end{align*}
\label{problem:utec}
\end{problem}
We also study the following outlier sensitive variant asking for $k$ paths which best fit to the coverage only of the nodes and edges that they contain. 
\begin{problem}[$k$-\textsf{UTEO}]
Under the same assumptions as for Problem~$k$-\textsf{UTEC}, the \emph{$k$-Unannotated Transcript Expression Outlier} Problem is to find a tuple $\mathcal{P}$ of $k \geq 1$ paths from the sources of $G$ to the sinks of $G$, with an estimated expression level $e(P)$ for each path $P \in \mathcal{P}$, which minimize
\begin{align*}
\seo(G,\mathcal{P}) := & \sum_{\begin{subarray}{c}P \in \mathcal{P},\, v\in P\end{subarray}} f_v\Big(\Big|cov(v) - \sum_{P \in \mathcal{P}\text{ s.t. } v\in P} e(P) \Big|\Big) + &\\
&\sum_{\begin{subarray}{c}P \in \mathcal{P},\, (u,v)\in P\end{subarray}} f_{uv}\Big(\Big|cov(u,v) - \sum_{P \in \mathcal{P}\text{ s.t. } (u,v)\in P} e(P) \Big|\Big).&
\end{align*}
\end{problem}

In Sec.~\ref{sec:np-hardness-proof} we show that both problems are NP-hard in the strong sense.\footnote{We should note that a preliminary version of this paper was presented as a poster at the RECOMB-seq, April 2013, conference \url{http://bioinfo.au.tsinghua.edu.cn/recomb2013/upload/programseq.pdf}, and that our NP-hardness proof has already inspired the NP-hardness proof \cite{maximumlikelihood} of the isoform reconstruction by maximum likelihood problem, deployed in tools such as iReckon, NSMAP, Montebello.} Nevertheless, in Sec.~\ref{sec:dynamic-programming} we give dynamic programming algorithms with a time-complexity of $O(|M|^k(n^2+\Delta^k)n^{k})$, where $M$ is the set of all possible expression levels, and $\Delta$ is the maximum in-degree of the graph. To obtain a practical implementation of these algorithms, we apply, as explained in Sec.~\ref{sec:implementation}, the following optimizations and heuristics: 
\begin{enumerate}
\item We decompose the problem along cut nodes, i.e., we find a node whose removal disconnects the graph into two components, and recursively solve the problem on the two subgraphs.

\item We employ a genetic algorithm for finding the optimal expression levels: the fitness of a given $k$-tuple of expression levels is the cost of the optimal paths having these expression levels, obtained by our dynamic programming in time $O((n^2+\Delta^k)n^{k})$; experiments show that the genetic algorithm has very small variability in practice. 

\item In order to reduce the exponential dependency on $k$, we choose a $k' \leq k$ depending on the size of the graph and guaranteeing that the problem is tractable, then compute the optimal $k'$ paths, remove their weight from the graph, and recurse until obtaining $k$ paths in total. 

\end{enumerate}

Experimental results, given in Sec.~\ref{sec:experimental-results}, show that our algorithm, together with the above optimizations and heuristics, has better performance on real RNA-Seq data, and similar or better performance on simulated data, than our min-cost flow method, and than state-of-the-art tools Cufflinks, IsoLasso, and SLIDE. In these experiments, we run the program for all values of $k$ up to a bound depending on the size of the input graph, and choose the $k$ such that the paths returned by the program have the minimum value of the objective function. However, the choice of $k$ is highly customizable by the user.

\section{Methods}

\subsection{The NP-Hardness Proof}
\label{sec:np-hardness-proof}


\begin{theorem}
\label{thm:bp-hard}
If the cost functions $f_v$ and $f_{uv}$ are such that $f_v(0) = 0$, $f_{uv}(0) = 0$, and $f_v(x) > 0$, $f_{uv}(x) > 0$ for all $x >0$ and all nodes $v$ and edges $(u,v)$ of the input splicing graph, then Problems~$k$-\textsf{UTEC} and $k$-\textsf{UTEO} are NP-hard in the strong sense.
\end{theorem}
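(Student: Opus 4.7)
The plan is to reduce from 3-Partition, a classical strongly NP-hard problem: given $3m$ positive integers $a_1,\ldots,a_{3m}$ with $\sum_i a_i = mB$ and $B/4 < a_i < B/2$ for every $i$, decide whether they can be grouped into $m$ triples each summing to $B$. From such an instance I would construct, in time polynomial in the \emph{unary} encoding of the input, a splicing DAG $G$ with strictly positive node and edge coverages, fix $k=m$, and show that the 3-Partition instance is a YES-instance if and only if the resulting $k$-\textsf{UTEC} instance admits an optimal solution of cost $0$.

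First I would design an ``item gadget'' for each number $a_i$ whose coverages encode $a_i$, together with a global connecting structure so that every source-to-sink path in $G$ is forced to (i) visit exactly three item gadgets and (ii) carry an expression level equal to $B$. Under such a design, the three items hit by each of the $k=m$ paths represent a candidate triple, and the $m$ paths together represent a candidate partition of the multiset $\{a_1,\ldots,a_{3m}\}$.

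Next I would use the hypothesis on the fitness functions. Since $f_v(0)=f_{uv}(0)=0$ and both are strictly positive on positive inputs, $\sec(G,\mathcal P)$ vanishes precisely when, for every node $v$ and every edge $(u,v)$, the sum of the expression levels of the paths through it equals the corresponding coverage. Combined with the structure of $G$, this happens iff the three items selected on each of the $m$ paths sum to $B$, i.e., iff the 3-Partition instance is solvable. For Problem $k$-\textsf{UTEO} I would arrange the construction so that every node and every edge is used by at least one path in any $m$-path candidate solution (for instance, by routing every path through a common backbone); then $\seo$ and $\sec$ agree on the relevant solutions, and the same reduction transfers.

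The hard part will be the gadget design itself: I must simultaneously enforce the ``expression level $= B$'' constraint, the ``exactly three items per path'' constraint, and strict positivity of all coverages, while keeping every numerical value polynomial in $\max_i a_i$ (this last point is what upgrades ordinary NP-hardness to \emph{strong} NP-hardness). Once such a gadget is in place, the equivalence between cost-$0$ solutions and valid 3-partitions, and hence the NP-hardness of both problems, follows directly from the assumption on $f$.
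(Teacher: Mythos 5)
You have the right source problem (3-Partition), the right target characterization (zero cost iff every node and edge is covered exactly, which is all the hypothesis on $f_v,f_{uv}$ is used for), and the right reason the reduction is pseudo-polynomial. But the gadget plan you outline cannot be completed, because your two design goals contradict the semantics of the objective function. In Problems $k$-\textsf{UTEC}/$k$-\textsf{UTEO}, a path contributes its \emph{entire} expression level $e(P)$ to every node and edge it traverses. So if each of your $k=m$ paths is forced to carry expression level $B$ and to pass through an item gadget whose coverage encodes $a_i$, then that gadget receives predicted coverage at least $B$ while its observed coverage is $a_i < B/2$, and the term $f(|a_i - B|)>0$ makes a zero-cost solution impossible; adding more paths through the gadget only increases the mismatch. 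There is no way to make a path ``deposit only $a_i$'' of its level at a gadget, so the numbers $a_i$ cannot be encoded as coverages of elements visited by level-$B$ paths.

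The fix is to invert the roles of paths and triples. Take $k=3q$ paths, one per \emph{item}, each carrying expression level $s(a_i)$, and encode an item as a node $x_i$ (with incident edges) of coverage $s(a_i)$ that its path covers exactly. The grouping into triples is then encoded on the other side of the graph: $q$ ``bucket'' nodes $z_1,\dots,z_q$, each with coverage $B$, through which the paths must be routed after passing a common hub $y$ of coverage $qB$; a bucket is exactly covered iff the expression levels of the paths routed through it sum to $B$, and the condition $B/4 < s(a) < B/2$ forces exactly three paths per bucket. The backward direction then needs a counting argument (total level through $y$ is $qB$, so all $3q$ items are covered by exactly one path each, and the $q$ buckets partition the paths into disjoint triples), which your sketch does not address. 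Routing every path through the common hub also gives you the $k$-\textsf{UTEO} case essentially for free, as you anticipated.
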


\begin{proof}
We follow the proof of \cite{Vatinlen20081390}, underlining the differences in what follows. We reduce from \textsf{3-PARTITION}. In this problem, we are given a set $A = \{a_1,\dots,a_{3q}\}$ with $3q$ elements, and for all $a \in A$, a positive integer $s(a)$, its size, such that $B/4 < s(a) < B/2$ and $\sum_{a \in A}s(a) = qB$. We are asked whether there exists a partition of $A$ into $q$ disjoint sets each of size $B$.

Given an instance $(A,s)$ to \textsf{3-PARTITION}, we construct (see also Fig.~\ref{fig:reduction}) the graph $G_{A,s}$ having:
\begin{itemize}
\item $V(G_{A,s}) = \{s,x_1,\dots,x_{3q},y,z_1,\dots,z_q,t\}$,
\item for every $i \in \{1,\dots,3q\}$, we add arcs $(s,x_i)$, $(x_i,y)$ to $G_{A,s}$, both with coverage $s(a_i)$, and also set the coverage of $x_i$ to $s(a_i)$,
\item for every $i \in \{1,\dots,q\}$, we add arcs $(y,z_i)$ and $(z_i,t)$ to $G_{A,s}$, both with coverage $B$, and also set the coverage of $z_i$ to $B$,
\item the coverage of $s$, $y$ and $t$ is $qB$.
\end{itemize}

\begin{figure}[t]
\centering
\subfigure[\label{fig:reduction}]{\scalebox{0.7}{
\begin{tikzpicture}[scale=.7, -angle 45]
\node[label={[label distance=-1mm]above left:$qB$},circle,draw=black,inner sep = -2mm,minimum size = 6mm,fill=white] (s) at (0,0) {$s$};
\node[label={[label distance=-1mm]above:$qB$},circle,draw=black,inner sep = -2mm,minimum size = 6mm,fill=white] (y) at (5,0) {$y$};
\node[label={[label distance=-1mm]80:$a_1$},circle,draw=black,inner sep = -2mm,minimum size = 6mm,fill=white] (x1) at (2.5,2) {$x_1$};
\node[label={[label distance=-1mm]-80:$a_2$},circle,draw=black,inner sep = -2mm,minimum size = 6mm,fill=white] (x2) at (2.5,.5) {$x_2$};
\node[label={[label distance=-1mm]-80:$a_{3q}$},circle,draw=black,inner sep = -2mm,minimum size = 6mm,fill=white] (xn) at (2.5,-2) {$x_{3q}$};
\draw (s) to node[draw=none,fill=none,auto] {$a_1$} (x1);
\draw (s) to node[draw=none,fill=none,below] {$a_2$} (x2);
\draw (s) to node[draw=none,fill=none,below left] {$a_{3q}$} (xn);
\draw (x1) to node[draw=none,fill=none,auto] {$a_1$} (y);
\draw (x2) to node[draw=none,fill=none,below] {$a_2$} (y);
\draw (xn) to node[draw=none,fill=none,below right] {$a_{3q}$} (y);
\node[draw=none,fill=none] (dots) at (2.5,-.6) {$\vdots$};

\node[label={[label distance=-1mm]80:$B$},circle,draw=black,inner sep = -2mm,minimum size = 6mm,fill=white] (z1) at (7.5,2) {$z_1$};
\node[label={[label distance=-1mm]-80:$B$},circle,draw=black,inner sep = -2mm,minimum size = 6mm,fill=white] (z2) at (7.5,.5) {$z_2$};
\node[label={[label distance=-1mm]-80:$B$},circle,draw=black,inner sep = -2mm,minimum size = 6mm,fill=white] (zn) at (7.5,-2) {$z_{q}$};
\node[label={[label distance=-1mm]above right:$qB$},circle,draw=black,inner sep = -2mm,minimum size = 6mm,fill=white] (t) at (10,0) {$t$};
\draw (y) to node[draw=none,fill=none,auto] {$B$} (z1);
\draw (y) to node[draw=none,fill=none,below] {$B$} (z2);
\draw (y) to node[draw=none,fill=none,below left] {$B$} (zn);
\draw (z1) to node[draw=none,fill=none,auto] {$B$} (t);
\draw (z2) to node[draw=none,fill=none,below] {$B$} (t);
\draw (zn) to node[draw=none,fill=none,below right] {$B$} (t);
\node[draw=none,fill=none] (dots) at (7.5,-.6) {$\vdots$};
\end{tikzpicture}
}}\hspace{1cm}
\subfigure[\label{fig:dynamic-programming}]{\includegraphics[width=3.8cm]{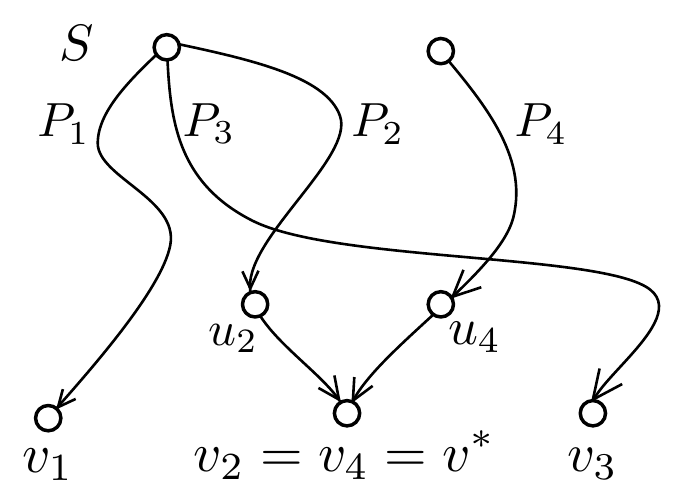}}
\caption{In Fig.~\ref{fig:reduction}, a reduction of \textsf{3-PARTITION} to Problems~$k$-\textsf{UTEC} or $k$-\textsf{UTEO}. In Fig.~\ref{fig:dynamic-programming}, computing $\sol(v_1,v_2,v_3,v_4)$. We assume that $v_2 = v_4$ is a sink of $G_{v_1,\dots,v_k}$, and it is chosen as $v^\ast$; we then enumerate through all pairs of vertices from $N^-(v^\ast) \times N^-(v^\ast)$; in this case, we find $(u_2,u_4)$ and we extend the optimal paths ending in $(v_1,u_2,v_3,u_4)$ with the edges $(u_2,v^\ast)$ and $(u_4,v^\ast)$}
\end{figure}

We prove that there exists a partition of $A$ into $q$ sets of size $B$ if and only if Problem~$k$-\textsf{UTEC} admits on $G_{A,s}$ a solution with cost 0 made up of at most $3q$ paths, and analogously for Problem~$k$-\textsf{UTEO}.

For the forward implication, let $A_1,\dots,A_q$ be a partition of $A$ into $q$ sets of size $B$. To obtain a solution to Problem~$k$-\textsf{UTEC} with cost 0, for every $A_i = \{a_{i_1},a_{i_2},a_{i_3}\}$ we add to the solution the three paths $(s,x_{i_1},y,z_i,t)$, $(s,x_{i_2},y,z_i,t)$, $(s,x_{i_3},y,z_i,t)$. These three paths completely cover the edges $(s,x_{i_1}), (x_{i_1},y)$, $(s,x_{i_2}), (x_{i_2},y)$, and $(s,x_{i_3}), (x_{i_3},y)$, respectively, and they are the only paths to do so, since $A_1,\dots,A_q$ is a partition of $A$. This results in a zero cost to be added to the objective function. Moreover, since $s(a_{i_1}) + s(a_{i_2}) + s(a_{i_3}) = B$, then these three paths together completely cover the edges $(y,z_i)$ and $(z_i,t)$. This also implies a zero cost to be added to the objective function.

For the backward implication, observe that a solution to Problem~$k$-\textsf{UTEC} with cost 0 and at most $3q$ paths must have exactly $3q$ paths. To see why this is the case, observe that the sum of the expression levels of the paths is exactly $qB$, as they pass through node $y$, and this is a zero-cost solution. Moreover, the sum of the coverages of vertices $x_1,\dots,x_{3q}$ is $qB$, by construction. The fact that this is a zero-cost solution thus implies that each of the $3q$ vertices $x_1,\dots,x_{3q}$ must be covered by at least one of the $3q$ paths. Therefore, each $x_i$ is covered by exactly one path. 

For every $i \in \{1,\dots,q\}$, let $Q_i$ denote the set of paths in this optimal solution covering node $z_i$. As this is a zero-cost solution, the sum of their expression levels is $B$, and their expression levels belong to $A$. Since $B/4 < s(a) < B/2$, for all $a \in A$, then each $Q_i$ contains exactly three paths. This entails that for any $1 \leq i < j \leq q$, $Q_i \cap Q_j = \emptyset$. Therefore, by associating with each $i \in \{1,\dots,q\}$ the subsets of $A$ that correspond to the first arc of the three paths of $Q_i$ we obtain a partition of $A$ into $q$ sets of size $B$.

The proof of \cite[Proposition 2]{Vatinlen20081390} can be followed identically from this point onwards to show that this is a pseudo-polynomial reduction. The proof for Problem~$k$-\textsf{UTEO} is entirely similar.
\qed\end{proof}

\subsection{The Dynamic Programming Algorithms}
\label{sec:dynamic-programming}

Onwards, we propose dynamic programming algorithms for Problems~$k$-\textsf{UTEO} and $k$-\textsf{UTEC}. Since the algorithm for Problem~$k$-\textsf{UTEO} is simpler than for Problem~$k$-\textsf{UTEC}, we present the former here, and defer the latter to the full version of this paper. 

We will assume that the possible expression levels of the paths in an optimal solution belong to a finite set $M$. Our strategy is to find the optimal $k$-tuple of paths having a fixed $k$-tuple of expression levels. The solution is then obtained by enumerating all $k$-tuples of expression levels from $M^k$, and taking the $k$-tuple of paths having the smallest value of the objective function. Despite the dependency on the expression levels' values, having the two steps separated means that we can employ, for a practical implementation, any local search heuristic for finding the optimal expression levels. This search will be guided by the cost of the objective function returned by the dynamic programming; the search can be done at any chosen granularity, eventually including a priori information about the true expression levels. We employ a genetic algorithm which behaves well in practice (see Sec.~\ref{sec:implementation}).

Let us assume from now on one such choice $(e_1,\dots,e_k)$ of expression levels fixed. The main difficulty behind the algorithm is that the paths can share vertices. Accordingly, we have to process all $k$-tuples of vertices of $V$; for every $(v_1,\dots,v_k) \in V^k$, we define 
\begin{align*}
&\sol(v_1,\dots,v_k) := \min_{\begin{subarray}{c}\text{paths $P_1,\dots,P_k$ in $G$,} \\ \text{each $P_i$ is from a source to $v_i$} \end{subarray}}\hspace{-.6cm} \seo(G,P_1,\dots,P_k).&
\end{align*}

Since the input directed graph is acyclic, we let $\prec$ be a topological order on $V$, and define the partial order $\prec^k$ on $V^k$ as follows:
$$(v_1',\dots,v_k') \prec^k (v_1,\dots,v_k) \text{\ \ \ iff\ \ \ } \exists i \in \{1,\dots,k\} \text{ such that } v_i' \prec v_i.$$

\begin{sloppypar}
\noindent Then, the computation of $\sol$ is done by dynamic programming, by enumerating the tuples $(v_1,\dots,v_k) \in V^k$ in the order $\prec^k$, and computing $\sol(v_1,\dots,v_k)$ from the previous values, according to $\prec^k$, as indicated in Algorithm~\ref{alg:se-recurrence}, where $N^-(v)$ denotes the in-neighborhood of a node $v$, and $G_{v_1,\dots,v_k}$ denotes the subgraph of $G$ induced by the vertices from which there is a directed path to one of $v_1,\dots,v_k$ (see Fig.~\ref{fig:dynamic-programming} for a sketch). 
\end{sloppypar}

%
%


\begin{algorithm}[t]
\caption{Computing $\sol(v_1,\dots,v_k)$ for a fixed tuple $(e_1,\dots,e_k)$ of expression levels, for Problem~$k$-\textsf{UTEO}.\label{alg:se-recurrence}}

\SetKwBlock{sol}{${\sf solution}(v_1,\dots,v_k)$}{end}

\tcc{\scriptsize initialization for all possible source tuples}
\ForEach{$(s_1,\dots,s_k) \in S^k$}{
$\displaystyle {\sf solution}(s_1,\dots,s_k) \leftarrow \sum_{\begin{subarray}{c}i \in \{1,\dots,k\}\text{ such that}\\\forall i' < i,\, s_{i} \neq {s_{i'}}\end{subarray}}f_{s_i}\Big(\Big|cov(s_i) - \sum_{\begin{subarray}{c}j \in \{i,\dots,k\}\\\text{s.t. }s_{j} = s_i\end{subarray}}e_j\Big|\Big)$\;
}

\sol{
	$min \leftarrow \infty$\;
	let $v^\ast$ be a sink of $G_{v_1,\dots,v_k}$, which is not the source of $G$\;
	let $i_1,\dots,i_\ell$ be all the positions in the tuple $(v_1,\dots,v_k)$ where $v^\ast$ appears\;
	\tcc{\scriptsize we enumerate through all $\ell$-tuples of in-neighbors of $v^\ast$}
	\ForEach{$(u_{i_1},\dots,u_{i_\ell}) \in N^-(v^\ast)^\ell$}{
			\tcc{\scriptsize we get the optimal cost for such a tuple}
			$err \leftarrow {\sf solution}(v_1,\dots,v_{i_1-1},u_{i_1},v_{i_1+1},v_{i_\ell-1},u_{i_\ell},v_{i_\ell+1},\dots,v_k)$\;
			\tcc{\scriptsize we sum up the cost of covering $v^\ast$ with the $\ell$ paths extended from $u_{i_1},\dots,u_{i_\ell}$ having expression levels $e_{i_1},\dots,e_{i_\ell}$}
			$\displaystyle err \leftarrow err + f_{v^\ast}\Big(\Big|c(v^\ast) - \sum_{j=1}^\ell e_{i_j}\Big|\Big)$\;
			\tcc{\scriptsize we sum up the cost of covering the edges $(u_{i_1},v^\ast), \dots, (u_{i_\ell},v^\ast)$ with the $\ell$ paths extended from $u_{i_1},\dots,u_{i_\ell}$ having expression levels $e_{i_1},\dots,e_{i_\ell}$, respectively}
			$\displaystyle err \leftarrow err + \hspace{-.5cm}\sum_{\begin{subarray}{c}j \in \{1,\dots,\ell\}\text{ such that}\\\forall j' < j,\; u_{i_{j}} \neq u_{i_{j'}}\end{subarray}}\hspace{-.5cm}f_{u_{i_j}v^\ast}\Big(\Big|c(u_{i_j},v^\ast) - \sum_{\begin{subarray}{c}t \in \{j,\dots,\ell\}\\\text{s.t. }u_{i_t} = u_{i_j}\end{subarray}} e_{i_t}\Big|\Big)$\;
			\textbf{if} {$err < min$} \textbf{then} $min \leftarrow err$\;
	}
	{\bf return} $min$.
}
\end{algorithm}

\begin{theorem}
If the cost functions $f_v$ and $f_{uv}$ are such that $f_v(x) \geq 0$, $f_{uv}(x) \geq 0$ for all $x \geq 0$ and all nodes $v$ and edges $(u,v)$ of the input splicing graph, then Problems~$k$-\textsf{UTEO} and $k$-\textsf{UTEC} can be solved in time $O(|M|^k(n^2+\Delta^k)n^{k})$, where $n := |V(G)|$, we assume that $M$ is the set of possible expression levels, and the maximum in-degree of $G$ is $\Delta$.
\label{thm:algorithm-cover}
\end{theorem}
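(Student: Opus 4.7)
The plan is to establish correctness of Algorithm~\ref{alg:se-recurrence} for Problem~$k$-\textsf{UTEO} with a fixed tuple $(e_1,\dots,e_k)$ of expression levels, then analyze its running time; the outer enumeration over all $|M|^k$ such tuples accounts for the leading factor.

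For correctness I would induct along the well-founded order $\prec^k$ on $V^k$. The base cases are source tuples $(s_1,\dots,s_k) \in S^k$, for which the only admissible choice is the trivial one-vertex path at each $s_i$; the initialization formula charges $f_{s_i}(\cdot)$ exactly once per distinct source (the guard $\forall i' < i,\, s_i \neq s_{i'}$ picks a canonical representative index) on the sum of the expression levels of all paths ending there. For the inductive step at a non-source tuple $(v_1,\dots,v_k)$, pick any sink $v^\ast$ of $G_{v_1,\dots,v_k}$; such a sink must itself be one of the $v_i$, so it occupies some positions $i_1,\dots,i_\ell$. The sink property implies that $v^\ast$ does not lie on any of the other paths, so the contributions of $v^\ast$ and of its incoming edges to $\seo$ come exclusively from the paths $P_{i_1},\dots,P_{i_\ell}$. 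Truncating each of these by its last edge yields paths ending at in-neighbors of $v^\ast$, and by $\seo$'s additive decomposition together with optimal substructure, the truncations must minimize $\seo$ on a tuple strictly earlier in $\prec^k$—which is exactly what the recurrence reads from the table. The added terms in the algorithm charge $f_{v^\ast}$ once with the aggregated expression level $\sum_j e_{i_j}$, and charge each edge $(u_{i_j}, v^\ast)$ once per distinct in-neighbor with the appropriate aggregated level, matching the $\seo$ decomposition exactly.

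For the running time, the outer loop contributes $|M|^k$. The DP table has $n^k$ entries, enumerated in the order $\prec^k$ derived from any topological order of $G$ (computed once in $O(n+m)$ time). At each entry, selecting $v^\ast$ and its positions of occurrence takes $O(k)$, and the inner loop over $|N^-(v^\ast)|^\ell \leq \Delta^k$ in-neighbor tuples performs $O(k)$ arithmetic, a constant number of $f$-evaluations, and one table lookup each. The remaining per-state bookkeeping—identifying the sinks of $G_{v_1,\dots,v_k}$, handling the inner summations with their duplicate-index guards, and the initialization pass over source tuples—can be uniformly bounded by $O(n^2)$. Combining these contributions yields the stated $O(|M|^k(n^2+\Delta^k)n^k)$.

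The main obstacle in extending the argument to Problem~$k$-\textsf{UTEC} is that the $\sec$ objective also charges every vertex and edge of $G$, including those that no path of $\mathcal{P}$ covers; hence the cost is no longer a pure function of the chosen paths but depends on their complement inside $G$. I would handle this either by a final correction pass that adds $f_v(cov(v))$ and $f_{uv}(cov(u,v))$ for every vertex and edge untouched by the chosen paths, or by augmenting the DP state with a suitable notion of uncovered mass so that those contributions can be accounted for while folding the paths together; both routes preserve the same $O(|M|^k(n^2+\Delta^k)n^k)$ bound, and the authors defer the full treatment to the extended version of the paper.
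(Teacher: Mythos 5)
Your proposal is correct and follows essentially the same route as the paper: the same optimal-substructure/exchange argument at a sink $v^\ast$ of $G_{v_1,\dots,v_k}$ (using that $v^\ast$ and its incoming edges cannot appear on the other paths), the same accounting of $f_{v^\ast}$ and of each distinct incoming edge, and the same $|M|^k \cdot n^k \cdot (n^2+\Delta^k)$ breakdown of the running time, with the $k$-\textsf{UTEC} case likewise deferred. The only cosmetic difference is that you phrase the correctness argument as an induction along $\prec^k$ whereas the paper writes out the two inequalities of the cut-and-paste argument explicitly.
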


\begin{proof} We give the proof only for Problem~$k$-\textsf{UTEO}. The algorithm and the proof for Problem~$k$-\textsf{UTEC} are analogous, but more involved; they will be presented in the full version of this paper.

Let $(P_1,\dots,P_k)$ be a tuple of $k$ \emph{optimal} paths starting in a source and ending in $v_1,\dots,v_k$, i.e.,
\begin{align}
&\seo(G,P_1,\dots,P_k) =\hspace{-1.05cm}\min_{\begin{subarray}{c}\indent\vspace{.05cm}\\ \text{paths $Q_1,\dots,Q_k$ in $G$,} \\ \text{each $Q_i$ is from a source to $v_i$} \end{subarray}}\hspace{-1.05cm} \seo(G,Q_1,\dots,Q_k).&
\label{proof:eqn0}
\end{align}

\noindent Let $v^\ast$ be a sink of $G_{v_1,\dots,v_k}$ which is not a source of $G$ (if none such node exists, then all $v_1, \dots, v_k$, are sources and the value of $\sol(v_1,\dots,v_k)$ has already been set). Assume also that $i_1,\dots,i_\ell$, $\ell \geq 1$, are the positions in the tuple $(v_1,\dots,v_k)$ where $v^\ast$ appears (see Fig.~\ref{fig:dynamic-programming} for a sketch).

Let $u_{i_1},\dots,u_{i_\ell}$ be the predecessors of $v^\ast$ on $P_{i_1},\dots,P_{i_\ell}$, respectively. For every $j \in \{1,\dots,\ell\}$, denote by $P^\ast_{i_j}$ the path $P_{i_j}$ from which we remove its last node, $v^\ast$. To simplify notation, denote by $(P_1,\dots,P^\ast,\dots,P_k)$ the tuple $(P_1,\dots,P_k)$ in which, for every $j \in \{1,\dots,\ell\}$, we replace $P_{i_j}$ by $P_{i_j}^\ast$. Similarly, we denote by $(v_1,\dots,u,\dots,v_k)$ the tuple $(v_1,\dots,v_k)$ in which, for every $j \in \{1,\dots,\ell\}$, we replace $v_{i_j}$ by $u_{i_j}$.

From the fact that $v^\ast$ is a sink of $G_{v_1,\dots,v_k}$, neither the node $v^\ast$, nor the edges in $new\_edges := \{(u_{i_1},v^\ast)$,\dots,$(u_{i_\ell},v^\ast)\}$, belong to any path in $G$ ending in $\{v_1,\dots,v_k\} \setminus \{v^\ast\}$. Therefore, the following relation holds:

\begin{equation}
\begin{split}
&\seo(G,P_1,\dots,P_k) = \seo(G,P_1,\dots,P^\ast,\dots,P_k) + \\
& + f_{v^\ast}\Big(\Big|c(v^\ast) - \sum_{j=1}^\ell e_{i_j}\Big|\Big) +
\hspace{-.5cm}\sum_{\begin{subarray}{c}j \in \{1,\dots,\ell\}\text{ such that}\\\forall j' < j,\; u_{i_{j}} \neq u_{i_{j'}}\end{subarray}}\hspace{-.5cm}f_{u_{i_j}v^\ast}\Big(\Big|c(u_{i_j},v^\ast) - \sum_{\begin{subarray}{c}t \in \{j,\dots,\ell\}\\\text{s.t. }u_{i_t} = u_{i_j}\end{subarray}} e_{i_t}\Big|\Big).
\end{split}
\label{proof:eqn1}
\end{equation}

\begin{sloppypar}
Let $(P_1',\dots,P_k')$ be any tuple of $k$ paths from a source to $(v_1,\dots,u,\dots,v_k)$ such that $\seo(G,(P_1',\dots,P_k')) = \sol(v_1,\dots,u,\dots,v_k)$. From the fact that also the paths $P_1,\dots,P^\ast,\dots,P_k$ end in $v_1,\dots,u,\dots,v_k$, respectively, and the optimality of $(P_1',\dots,P_k')$, we have
\end{sloppypar}
\begin{equation}
\seo(G,P_1,\dots,P^\ast,\dots,P_k) \geq \seo(G,P_1',\dots,P_k').
\label{proof:eqn2}
\end{equation}

From (\ref{proof:eqn2}) and (\ref{proof:eqn1}) we get
\begin{equation}
\begin{split}
&\seo(G,P_1,\dots,P_k) \geq \seo(G,P_1',\dots,P_k') + \\
& + f_{v^\ast}\Big(\Big|c(v^\ast) - \sum_{j=1}^\ell e_{i_j}\Big|\Big) + 
\hspace{-.5cm}\sum_{\begin{subarray}{c}j \in \{1,\dots,\ell\}\text{ such that}\\\forall j' < j,\; u_{i_{j}} \neq u_{i_{j'}}\end{subarray}}\hspace{-.5cm}f_{u_{i_j}v^\ast}\Big(\Big|c(u_{i_j},v^\ast) - \sum_{\begin{subarray}{c}t \in \{j,\dots,\ell\}\\\text{s.t. }u_{i_t} = u_{i_j}\end{subarray}} e_{i_t}\Big|\Big).
\end{split} 
\label{proof:eqn3}
\end{equation}

Let us denote by $(P_1',\dots,P'\cup\{v\},\dots,P_k)$ the tuple $(P_1',\dots,P_k')$ in which, for every $j \in \{1,\dots,\ell\}$, we add node $v^\ast$ at the end of path $P_{i_j}$. Resorting again to the fact that the node $v^\ast$ and the edges in $new\_edges$ do not belong to any path in $G$ ending in $\{v_1,\dots,v_k\} \setminus \{v^\ast\}$, we get that the right-hand side of the inequality (\ref{proof:eqn3}) is equal to $\seo(G,P_1',\dots,P'\cup\{v^\ast\},\dots,P_k')$, thus:
\begin{align}
&\seo(G,P_1,\dots,P_k) \geq \seo(G,P_1',\dots,P'\cup\{v^\ast\},\dots,P_k').&
\label{proof:eqn4}
\end{align}

To conclude, if we enumerate through all $(v_{i_1}',\dots,v'_{i_\ell}) \in N^-(v^\ast)^\ell$, we will find the nodes $u_{i_1},\dots,u_{i_\ell}$ preceding $v^\ast$ on the optimal paths $P_{i_1},\dots,P_{i_\ell}$, respectively. If we extend each optimal path ending in $v_1,\dots,u,\dots,v_k$ by the node $v^\ast$, we obtain paths $P_1',\dots,P'\cup\{v^\ast\},\dots,P_k'$ whose cost is optimal, by (\ref{proof:eqn0}) and (\ref{proof:eqn4}).

\begin{sloppypar}
Once table $\sol$ is computed, the solution for Problem~\textsf{$k$-UTEO} is $\min_{(t_1,\dots,t_k) \in T^k} \solc(t_1,\dots,t_k)$,
\noindent where $T$ is the set of sinks of $G$. Finally, note that if we store in $\sol(v_1,\dots,v_k)$ also the predecessors $u_1,\dots,u_k$ of $v_1,\dots,v_k$ on the optimal paths from the sources, we can then trace back the $k$ optimal paths from sources to sinks.
\end{sloppypar}

The time complexity bound follows from the fact that there are $n^k$ tuples $(v_1,\dots,v_k) \in V^k$, computing the sinks of $G_{v_1,\dots,v_k}$ takes time linear in the number of edges of $G_{v_1,\dots,v_k}$, which are $O(n^2)$, and there are at most $\Delta^k$ candidate predecessors $u_{i_1},\dots,u_{i_\ell}$ of $v^\ast$ on $P_{i_1},\dots,P_{i_\ell}$, respectively, in the in-neighborhood of $v^\ast$.
\qed
\end{proof}


\subsection{Optimizations and heuristics for a practical implementation}
\label{sec:implementation}
We implemented the two algorithms in our tool Traph \cite{Tomescu:2013fk}, which can be seen as an alternative version of our min-cost flow method.

In order to speed-up the dynamic programming, we look for nodes whose removal disconnects the input graph $G$. If $v$ is such a cut node, then we consider the subgraphs $G_1$, induced by $v$ and the nodes from which there is a directed path to $v$, and $G_2$, induced by $v$ and the nodes to which there is a directed path starting from $v$. It is clear that $v$ is the only sink in $G_1$ and the only source in $G_2$. We can then recursively solve Problem \textsf{$k$-UTEC}/\textsf{$k$-UTEO} for $G_1$, and use the optimal solution for $v$ in $G_1$ as initialization to the dynamic programming table of $G_2$, and solve the problem on $G_2$. 

Moreover, in order to avoid enumerating through all tuples of possible expression levels, at present we employ a genetic algorithm: for each individual $(e_1,\dots,e_k)$ in the current population, its fitting function $\phi(e_1,\dots,e_k)$ is the cost of the optimal paths having the expression levels $(e_1,\dots,e_k)$, computed with our dynamic programming algorithms. 

As by its nature genetic algorithm results vary, Traph iterates a hundred times each run, and the algorithm is rerun five times with different initial values. The solution with the smallest value of the objective function is then chosen.
We tested the variability of the results by running Traph a hundred times on a sample containing alignments from one gene. All hundred runs resulted to the same transcripts, with the standard deviation of the path weight being less than 0.001\% of the mean path weight for every transcript.

In order to reduce the exponential dependency on $k$, we choose a number $k' \leq k$ which depends on $n$ and $\Delta$ such that $O((n^2+ \Delta^k)n^k)$ is small for practical purposes. We then solve the problem by looking for the best $k'$ paths with their optimal expression levels, remove their weights from the graph, and then recurse until obtaining $k$ paths as requested.

We use a least sum of squares model, i.e., the fitness function that we use is $f_z(x) = x^2$, for all nodes and edges $z$. In the present work, we also tried the fitting function $f_v(x) = x^2 * length(v)^2$ (which tries to explain the total coverage of the exons, not only their average coverage), and in \cite{Tomescu:2013fk} we tried the fitting function $f_v(x) = x/cov(v)$, but they did not give better results.

\section{Experimental results}
\label{sec:experimental-results}

We compared Traph cover (Problem~\textsf{UTEC}), and Traph outlier (Problem~\textsf{UTEO}) with Cufflinks~\cite{Tra10}, IsoLasso~\cite{LFJ11}, SLIDE~\cite{Li:2011fk}, and to our min-cost flow method \cite{Tomescu:2013fk}. We tried to compare also against Scripture, iReckon, CLIIQ, but we ran into compatibility issues in installing them, or we could not get reliable results. Even though Traph is not yet employing paired-end read information,  the experiments (both simulated and real) were conducted with paired-end reads, and Cufflinks, IsoLasso and SLIDE had access to the paired-end information. As Traph is a \emph{de novo} genome-based tool, we ran the other tools without annotation. Our experiment setup and validation criteria are the same as in \cite{Tomescu:2013fk}.\footnote{In \cite{Tomescu:2013fk} we use \emph{bitscore} instead of sequence dissimilarity, which is based on normalized compression distance and is better grounded as a measure. However, it needs full alignment to be output. Here we approximate this measure with sequence dissimilarity, which is fast to compute using Myers's bitparallel algorithm \cite{DBLP:journals/jacm/Myers99}, and this enables much larger data sets to be evaluated within the time constraints of article submission.} We run the algorithms for all values of $k$ up to a bound depending on the size of the input graph, and choose the minimum $k$ giving the minimum of the objective function. However, the choice of $k$ can be easily customized by the user.

\smallskip
\noindent \emph{Matching criteria.} In order to match the predicted transcripts with the true transcripts, we take into account sequences but also expression levels. For each gene, we construct a bipartite graph with the true transcripts $\mathcal{T} = (T_1,T_2,\dots)$ as nodes in one set of the bipartition, and the predicted transcripts $\mathcal{P} = (P_1, P_2,\dots)$ as nodes in the other set of the bipartition. Empty sequences with $0$ expression level were added so that both sets of the bipartition had an equal number of nodes. The cost of an edge between a true transcript $T_i$ with expression level $e(T_i)$ and a predicted transcript $P_j$ with expression level $e(P_j)$ is defined as a combined measure between: (i) the edit distance between $T_i$ and $P_j$, divided by $\max(|T_i|,|P_j|)$, which we call \emph{sequence dissimilarity}; and (ii) the ratio $|e(P_j) - e(T_i)|/e(T_i)$, which we call \emph{relative expression level difference} (see \cite{Tomescu:2013fk} for further details). The minimum weight perfect matching was then computed; this gives a one-to-one mapping between true and predicted transcripts. Each matched node pair with relative expression level difference and sequence dissimilarity under given thresholds define a true positive event (TP). The other kind of nodes define false positive (FP) and false negative (FN) events, depending on side of the bipartite graph they reside. The prediction efficiency is based on precision = TP/(TP+FP), recall = TP/(TP+FN) and F-measure = 2$*$precision$*$recall/(precision+recall).

\newcommand{\plota}{
	\begin{tikzpicture}	
	\pgfplotsset{ 
		tick label style={font=\scriptsize}, 
		label style={font=\scriptsize}, 
		legend style={font=\tiny,draw=none,fill=none},
		legend pos = {north west},
		every axis x label/.append style={
			at={(0.5,0)}, 
			below,
			yshift = 5pt
		},
		every axis y label/.append style={
			at={(0,0.5)}, 
			below,
			yshift = -3pt
		}
	}

\begin{axis}[height=4cm, width=6cm, ylabel={F-measure}, xlabel={sequence dissimilarity}, enlargelimits=false, 
	xmin=0.08,xmax=0.92,
    ymin=0,ymax=.3,
    xtick={.1,.3,.5,.7,.9},
    xticklabels={10\%,30\%,50\%,70\%,90\%},
    ytick={0,.1,.2,.3},
    y tick label style={
    /pgf/number format/.cd,
    fixed,
    precision=3
  }
]

	\pgfsetplotmarksize{2.5pt}	
	\addplot[mark=o,color=black] table[x=bitscore,y=fscore] {plots/wabi/singles/Cufflinks (Average over all genes) ed0.1.stat};\addlegendentry{Cufflinks};
	\addplot[mark=square,color=blue] table[x=bitscore,y=fscore] {plots/wabi/singles/Isolasso (Average over all genes) ed0.1.stat};\addlegendentry{IsoLasso};
	\addplot[mark=+,color=orange] table[x=bitscore,y=fscore] {plots/wabi/singles/Slide (Average over all genes) ed0.1.stat};\addlegendentry{SLIDE};
	\addplot[mark=triangle,color=red] table[x=bitscore,y=fscore] {plots/wabi/singles/Traph flow (Average over all genes) ed0.1.stat};\addlegendentry{Min-cost flow};
	\addplot[mark=star,color=mygreen] table[x=bitscore,y=fscore] {plots/wabi/singles/Traph dynamic cover (Average over all genes) ed0.1.stat};\addlegendentry{Traph cover};
	\end{axis}

	\end{tikzpicture}
}
\newcommand{\plotb}{
	\begin{tikzpicture}	
	\pgfplotsset{ 
		tick label style={font=\scriptsize}, 
		label style={font=\scriptsize}, 
		legend style={font=\tiny,draw=none,fill=none},
		legend pos = {south east},
		every axis x label/.append style={
			at={(0.5,0)}, 
			below,
			yshift = 5pt
		},
		every axis y label/.append style={
			at={(0,0.5)}, 
			below,
			yshift = -3pt
		}
	}

\begin{axis}[height=4cm, width=6cm, xlabel={sequence dissimilarity}, enlargelimits=false, 
	xmin=0.08,xmax=0.92,
    ymin=0,ymax=.3,
    xtick={.1,.3,.5,.7,.9},
    xticklabels={10\%,30\%,50\%,70\%,90\%},
    ytick={0,.1,.2,.3,.4,.5,.6,.8},
]

	\pgfsetplotmarksize{2.5pt}	
	\addplot[mark=o,color=black] table[x=bitscore,y=fscore] {plots/wabi/singles/Cufflinks (Average over all genes) ed0.4.stat};
	\addplot[mark=square,color=blue] table[x=bitscore,y=fscore] {plots/wabi/singles/Isolasso (Average over all genes) ed0.4.stat};
	\addplot[mark=+,color=orange] table[x=bitscore,y=fscore] {plots/wabi/singles/Slide (Average over all genes) ed0.4.stat};
	\addplot[mark=triangle,color=red] table[x=bitscore,y=fscore] {plots/wabi/singles/Traph flow (Average over all genes) ed0.4.stat};
	\addplot[mark=star,color=mygreen] table[x=bitscore,y=fscore] {plots/wabi/singles/Traph dynamic cover (Average over all genes) ed0.4.stat};
	\end{axis}

	\end{tikzpicture}
}

\newcommand{\plotc}{
	\begin{tikzpicture}	
	\pgfplotsset{ 
		tick label style={font=\scriptsize}, 
		label style={font=\scriptsize}, 
		legend style={font=\tiny,draw=none,fill=none},
		legend pos = {north east},
		every axis x label/.append style={
			at={(0.5,0)}, 
			below,
			yshift = 5pt
		},
		every axis y label/.append style={
			at={(0,0.5)}, 
			below,
			yshift = -3pt
		}
	}

\begin{axis}[height=4cm, width=6cm, xlabel={sequence dissimilarity}, enlargelimits=false, 
	xmin=0.08,xmax=0.92,
    ymin=0,ymax=0.07,
    xtick={.1,.3,.5,.7,.9},
    xticklabels={10\%,30\%,50\%,70\%,90\%},
    ytick={0,.035,.07},
    y tick label style={
    /pgf/number format/.cd,
    fixed,
    precision=3
  },
]

	\pgfsetplotmarksize{2.5pt}	
	\addplot[mark=o,color=black] table[x=bitscore,y=fscore] {plots/wabi/batch/cufflinks ed0.1.txt};
	\addplot[mark=square,color=blue] table[x=bitscore,y=fscore] {plots/wabi/batch/isolasso ed0.1.txt};
	\addplot[mark=+,color=orange] table[x=bitscore,y=fscore] {plots/wabi/batch/slide ed0.1.txt};
	\addplot[mark=triangle,color=red] table[x=bitscore,y=fscore] {plots/wabi/batch/traph flow ed0.1.txt};
	\addplot[mark=star,color=mygreen] table[x=bitscore,y=fscore] {plots/wabi/batch/traph dynamic cover ed0.1.txt};
	\end{axis}

	\end{tikzpicture}
}

\newcommand{\plotd}{
	\begin{tikzpicture}	
	\pgfplotsset{ 
		tick label style={font=\scriptsize}, 
		label style={font=\scriptsize}, 
		legend style={font=\tiny,draw=none,fill=none},
		legend pos = {south east},
		every axis x label/.append style={
			at={(0.5,0)}, 
			below,
			yshift = 5pt
		},
		every axis y label/.append style={
			at={(0,0.5)}, 
			below,
			yshift = -3pt
		}
	}
\begin{axis}[height=4cm, width=6cm, xlabel={sequence dissimilarity}, enlargelimits=false, 
	xmin=0.08,xmax=0.92,
    ymin=0,ymax=.3,
    xtick={.1,.3,.5,.7,.9},
    ytick={0,.1,.2,.3,.4,.6,.8},
    xticklabels={10\%,30\%,50\%,70\%,90\%}
    ]

	\pgfsetplotmarksize{2.5pt}	
	\addplot[mark=o,color=black] table[x=bitscore,y=fscore] {plots/wabi/batch/cufflinks ed0.4.txt};
	\addplot[mark=square,color=blue] table[x=bitscore,y=fscore] {plots/wabi/batch/isolasso ed0.4.txt};
	\addplot[mark=+,color=orange] table[x=bitscore,y=fscore] {plots/wabi/batch/slide ed0.4.txt};
	\addplot[mark=triangle,color=red] table[x=bitscore,y=fscore] {plots/wabi/batch/traph flow ed0.4.txt};
	\addplot[mark=star,color=mygreen] table[x=bitscore,y=fscore] {plots/wabi/batch/traph dynamic cover ed0.4.txt};
	\end{axis}

	\end{tikzpicture}
}

\newcommand{\plottable}{
\scriptsize
\begin{tabular}{|l|c|c|c|c|c|c|}\hline 
\multirow{3}{*}{Tool} & Total & \multicolumn{5}{c|}{Shared with annotation at} \\
& predicted & \multicolumn{5}{c|}{sequence dissimilarity under} \\
&  & \ 10\% & \ 20\%\ &\ 30\%\ &\ 40\%\ &\ 50\%\ \\\hline 
Cufflinks & 1916 & 648 & 955 & 1171 & 1307 & 1413 \\\hline 
IsoLasso & 1468 & 589 & 782 & 923 & 1022 & 1100 \\\hline 
SLIDE & 2229 & 635 & 983 & 1242 & 1391 & 1474 \\\hline 
Min-cost flow & 2148 & 722 & 1000 & 1228 & 1341 & 1456 \\\hline 
Traph cover & 2109 & \textbf{788} & \textbf{1063} & \textbf{1283} & \textbf{1407} & \textbf{1501} \\\hline 
\end{tabular}
}

\begin{figure}[t]
\subfigure[\texttt{s.} expr. diff. $\leq10\%$\label{fig:plot-c}]{\plota}%
\subfigure[\texttt{s.} expr. diff. $\leq 40\%$\label{fig:plot-d}]{\plotb}%
\\
\subfigure[\texttt{b.} expr. diff. $\leq 10\%$\label{fig:plot-e}]{\hspace{.3cm}\plotc}%
\subfigure[\texttt{b.} expr. diff. $\leq40\%$\label{fig:plot-f}]{\plotd}%
\\
\begin{center}
\subfigure[The total number of transcripts reported by the tools on real data\label{table:real-data}]{\raisebox{10mm}\plottable}%
\end{center}
%
%
%
\vspace{-.3cm}
\caption{Performance of the tools on simulated and real data. Plots \ref{fig:plot-c} and \ref{fig:plot-d} depict results in the \texttt{singles} scenario, and plots \ref{fig:plot-e} and \ref{fig:plot-f} depict results in the \texttt{batch} scenario; the legend for all plots is as in Fig.~\ref{fig:plot-c}. Real data results are in Fig.~\ref{table:real-data}.\label{fig:main-results}}
\vspace{-.3cm}
\end{figure}

\subsection{Simulated human data} 

For creating the simulated data we used the annotated genes in human chromosome 2 as reported by Ensembl database. Excluding the genes that had no transcripts as long or longer than the fragment size, we were left with 1,462 genes. We simulated reads with the RNASeqReadSimulator\footnote{http://www.cs.ucr.edu/\textasciitilde liw/rnaseqreadsimulator.html} by first choosing an expression level for each transcript at random from lognormal distribution with mean $-4$ and variance $1$, and then creating paired-end reads with fragment length mean 300 and standard deviation 20, with the starting positions of the fragments being chosen uniformly inside the transcripts. As argued in the case of IsoLasso~\cite{LFJ11}, various error models can be incorporated in these steps, but we chose to compare the performance of the methods in neutral conditions.

We devised two experiment setups. In the first one, which we call \texttt{singles}, 300,000 paired-end reads were generated independently from the transcripts of each of the genes, with the already assigned expression levels. They were independently given to TopHat \cite{TPS09} for alignment, and these independent alignment results were fed to each tool. In the second, more realistic experiment, which we call \texttt{batch}, we randomly chose 100 of the genes, chose expression levels for them with the same distribution as before and simulated 100 $*$ 300,000 reads as above. All these reads were fed to TopHat for alignment, and these combined alignment results were fed to the tools. The fragment length mean and standard deviation were passed to the tools, except for Cufflinks in \texttt{batch}, when it was able to infer them automatically; as our simulated data did not contain any single exon genes, SLIDE was unable to infer the fragment distribution, and it was given the fragment length mean and the standard deviation.

Cufflinks, IsoLasso and SLIDE were ran with the default parameters, because the parameters they offer relate to RNA-seq lab protocol, which was not simulated; we could not see changes to other parameters which could be relevant to the prediction. SLIDE's results are highly dependent on the lambda values, and as such it encourages the user to manually adjust the lambda values if the result set seems to either be missing isoforms or contain too many short isoforms, but for the sake of having automated tests we used the lambda values SLIDE estimated from the data as is. We use FPKM values as expression levels. Full simulated experiment input data is available on the webpage of Traph.

Fig.~\ref{fig:main-results} shows selected validation results. For each experiment, we choose two thresholds for the relative expression level differences, namely $10\%$ and $40\%$. Overall, Traph cover outperformed Traph outlier; at the moment we consider Traph cover as the default implementation, and plan to apply Traph outlier to other multi-assembly problems. In the \texttt{singles} scenario, Cufflinks, our min-cost flow method and Traph cover have very similar F-measure and out-perform IsoLasso and SLIDE. In the \texttt{batch} scenario, we obtain the same situation when the relative expression level difference is allowed to be at most $10\%$, but the min-cost flow method and Traph cover out-perform the other tools when the threshold for relative expression level difference is $40\%$, Traph cover giving slightly better results. Note that in the \texttt{batch} scenario the tools predicted transcripts which fall outside the annotated gene areas, which we accounted as FP events in the plots. For the 100 genes, Cufflinks predicted 512 transcripts inside gene areas, and 215 outside gene areas, Isolasso had 384 predictions inside and a surprising 7,422 outside; SLIDE had 725 inside and 94 outside; Traph cover had 458 inside and 98 outside; the min-cost flow method had 413 inside and 74 outside.

\smallskip
\noindent \emph{Running times.} On the batch dataset of reads from 100 genes, Cufflinks ran in 421 min, IsoLasso in 38 min, SLIDE in 257 min. Our script for creating the splicing graphs is written in Python and took 180 min; the min-cost flow method ran on these splicing graphs in 117 min, and Traph cover in 538 min.


\subsection{Real human data} We used the same real dataset from the IsoLasso paper, Caltech RNA-Seq track from the ENCODE project [GenBank:SRR065504], consisting of 75bp paired-end reads. Out of these reads, we picked the 2,406,339 which mapped to human chromosome~2. We selected the 735 genes where all tools made some prediction; these genes have 6,325 annotated transcripts. 

The transcripts predicted by each tool are matched with the annotated transcripts, employing the same minimum weight perfect matching method introduced before, but without taking into account expression levels. A true positive is a match selected by the perfect matching with varying sequence dissimilarity threshold. We present these results in Table~\ref{table:real-data}, where we note that Traph cover reports the most transcripts which match the annotation at all thresholds of sequence dissimilarity.

\section{Conclusion}
In this paper we tackled two multi-assembly problems arising from transcript identification and quantification with RNA-Seq, which ask for the $k$ paths which best explain, under given fitting functions, the coverages of a splicing graph. In our experiments we worked with least sum of squares  as fitting function, but our method supports very general fitting functions. We expect that these two models and algorithms to be applicable to other multi-assembly problems, such as in metagenomics or in viral quasi-species assembly. 

The two problems considered, Problem~$k$-\textsf{UTEO} and $k$-\textsf{UTEC}, are shown to be NP-hard in the strong sense, proof which already inspired a similar NP-hardness proof \cite{maximumlikelihood} of another problem pertaining to multi-assembly. If some of the input parameters are bounded ($k$, the maximum in-degree of the graph, the set of possible expression levels), then the problems can be solved in polynomial-time using dynamic programming. Nonetheless, in order to obtain a practical implementation, we considered three optimizations and heuristics which work well in practice, and in a feasible amount of time: on real data we report more annotated transcripts than Cufflinks, IsoLasso, SLIDE and our previous min-cost flow method, while on simulated data we obtain similar or better performance.


\section*{Acknowledgements}

We wish to thank Antti Honkela for many insightful discussions on transcript prediction. We also thank Travis Gagie for discussions on the NP-hardness proof. This work was partially supported by Academy of Finland under grant 250345 (CoECGR).

\bibliographystyle{splncs_srt}
\bibliography{paper_short}

\end{document}